\newtheorem{theorem}{Theorem}
\newtheorem{lemma}{Lemma}
\let\Pr\relax
\DeclareMathOperator{\Pr}{\mathbb{P}}
\DeclareMathOperator{\E}{\mathbb{E}}
\newcommand{\R}{\mathbb{R}}
\title{Improving the output quality of official statistics based on machine learning algorithms}
\author[1,2,3]{Q.A. Meertens\thanks{Corresponding author: \texttt{q.a.meertens@uva.nl}}}
\affil[1]{Statistics Netherlands}
\affil[2]{University of Amsterdam}
\affil[3]{Leiden University}
\author[2]{C.G.H. Diks}
\author[3]{H.J. van den Herik}
\author[3]{F.W. Takes}
\date{14 December 2020}
\begin{document}

\maketitle

\section*{Abstract}
National statistical institutes currently investigate how to improve the output quality of official statistics based on machine learning algorithms. A key obstacle is concept drift, i.e., when the joint distribution of independent variables and a dependent (categorical) variable changes over time. Under concept drift, a statistical model requires regular updating to prevent it from becoming biased. However, updating a model asks for additional data, which are not always available. In the literature, we find a variety of bias correction methods as a promising solution. In the paper, we will compare two popular correction methods: the misclassification estimator and the calibration estimator. For prior probability shift (a specific type of concept drift), we investigate the two correction methods theoretically as well as experimentally. Our theoretical results are expressions for the bias and variance of both methods. As experimental result, we present a decision boundary (as a function of (a) model accuracy, (b) class distribution and (c) test set size) for the relative performance of the two methods. Close inspection of the results will provide a deep insight into the effect of prior probability shift on output quality, leading to practical recommendations on the use of machine learning algorithms in official statistics.
\footnote[3]{The views expressed in this paper are those of the authors and do not necessarily reflect the policy of Statistics Netherlands. The authors would like to thank Sander Scholtus for his useful comments on an earlier version of the manuscript.}

\vspace{10pt}
\noindent \textbf{Keywords:} \textit{Machine learning, output quality, concept drift, prior probability shift, misclassification bias}

\newpage

%
%
\section{Introduction}\label{sec:introduction}
In recent years, many national statistical institutes (NSIs) have experimented with supervised machine learning algorithms with the purpose of producing new or improved official statistics. \citet{beck_machine_2018} provide a list of 136 machine learning projects at NSIs in 25 countries. In many projects, machine learning was used for classification (78) or for imputation (22). The results of these machine learning projects are promising and therefore currently seen as a paradigm shift in official statistics, in which model-based statistics are widely embraced \citep{debroe_updating_2020}.

The quality of the statistical output is a key challenge when employing classification algorithms for producing official statistics. Output quality is a fundamental component in any quality framework for official statistics, see, e.g., the OECD quality framework \citep{oecd_quality_2011} and the Regulation on European Statistics \citep{european_commission_regulation_2009} translated into the European Statistics Code of Practice \citep{eurostat_european_2017}. When using classification algorithms for official statistics, the output quality ought to be measured using the mean squared error of the statistical output \citep{buelens_model_2016}.

In the machine learning literature, the accuracy of classification algorithms is measured at the level of individual data points. Interestingly, the algorithmic accuracy at the level of \textit{individual} data points differs fundamentally from the accuracy (at the \textit{population} level) of the (aggregated) statistical output of classification algorithms \citep{forman_counting_2005}. In fact, classification algorithms that have high algorithmic accuracy might still produce highly biased statistical output. This is referred to as \textit{misclassification bias}. It is a type of bias that is commonly overlooked or neglected by statisticians of all time  \citep{schwartz_neglected_1985,gonzalez_review_2017}.

After many years of persistent research, a rich body of statistical literature on misclassification bias is readily available. Misclassification bias occurs in general when dealing with measurement errors in categorical data. The work by \citet{bross_misclassification_1954} is usually referred to as the first publication to discuss the problem of misclassification bias. Other significant contributions to the literature on misclassification bias include the work by \citet{tenenbein_double_1970} and the work by \citet{kuha_categorical_1997}. A relatively recent overview is provided by \citet{buonaccorsi_measurement_2010}.

The literature on misclassification bias shows that the bias can be reduced significantly, if some form of extra information is available. In the general context of categorical data analysis, this extra information can be, for instance, replicate values, validation data, or instrumental variables \citep{buonaccorsi_measurement_2010}. Although such extra information in general might not always be available, it is available in the context of supervised machine learning that we are considering here.
The extra information are validation data, which are traditionally used for model selection, training and testing. We will use the test set as validation data to estimate error rates, and thus to correct misclassification bias.

In experimental projects at NSIs, the test set often is a random sample from the target population (e.g., all households in the country). The setup corresponds to the double sampling scheme introduced by \citet{tenenbein_double_1970}. Among the correction methods discussed by \citet{buonaccorsi_measurement_2010}, the so-called \textit{calibration estimator} then outperforms all the others in terms of mean squared error, as proved theoretically by \citet{kloos_comparing_2020}.

However, a new problem arises when incorporating machine learning algorithms in the production process of official statistics. There, a statistical model is often estimated once and then applied for a longer period of time without updating the model parameters. In the context of supervised machine learning this is common, because otherwise new data have to be annotated manually in each time period leading to high production costs. However, the problem there is that the data distribution as well as the relation between the dependent and independent variables might change over time, causing the outcome of the model to be biased. In the machine learning literature, this problem is known as \textit{concept drift}. It has been investigated in stream learning and online learning for several decades \citep[see][]{widmer_learning_1996}, dating back at least to the work on incremental learning \cite[cf.][]{schlimmer_incremental_1986} in the 1980s. Originally, the term \textit{concept} was used for a set of Boolean-valued functions \citep{helmbold_tracking_1994}. Currently, it has a statistical interpretation that is more closely related to our setting. Nowadays, \citet{webb_characterizing_2016} state that the term \textit{concept} refers to the joint distribution $\Pr(Y, X)$, with class labels (dependent variable) $Y$ and features (independent variables) $X$, as proposed by \citet{gama_survey_2014}. Allowing such a joint distribution to depend on a time parameter $t$, concept drift in the setting of supervised learning means that $\Pr_{t_1}(Y, X) \neq \Pr_{t_2}(Y, X)$, for $t_1 \neq t_2$. The effect of concept drift is that misclassification bias might increase even further.

In this paper, we aim to prove which of the two popular correction methods discussed by \citet{buonaccorsi_measurement_2010} reduces the mean squared error of statistical output most, under a specific type of concept drift known as \textit{prior probability shift} \citep{moreno-torres_unifying_2012}. Our paper deliberately focuses on the production process (where concept drift arises), building on the results obtained by \citet{kloos_comparing_2020} for the preceding experimental phase. Our numerical analyses will show, for the first time, that a decision boundary arises. The optimal choice for a correction method depends on three parameters, viz. the class distribution (or class imbalance), the size of the test set, and the model accuracy. With that knowledge we aim to contribute to the literature on concept drift \textit{understanding} as defined by \citet{lu_review_2019}. It complements concept drift \textit{quantification} \citep{goldenberg_survey_2019} and concept drift \textit{adaptation} \citep{gama_survey_2014}. Analysing the decision boundary as a function of the three parameters yields practical recommendations for the implementation of classification algorithms in the production process of official statistics. Finally, analysing the impact of the \textit{size} of the (manually created) test set allows us to comment on the cost efficiency of official statistics based on classification algorithms.

The remainder of the paper is organised as follows. In Section~\ref{sec:methods}, we provide expressions for the bias and variance of the misclassification and calibration estimator, when applied to machine learning algorithms that have been implemented in the production process of official statistics. We show (1) that the optimal correction method in the experimental phase is no longer unbiased when implemented in a production process and we provide (2) a sharp lower bound for the absolute value of its bias. Hence, instead of arriving at a conclusive optimal solution in the experimental phase, a decision boundary arises in the context of the production process. Subsequently, in Section~\ref{sec:results}, we investigate the location and shape of that decision boundary. In Section~\ref{sec:discussion} we present our conclusions and suggest three promising directions for future research.

%
%
\section{Methods}\label{sec:methods}

In the context of official statistics, the convention is to use the mean squared error to evaluate output quality, also when using statistical models \citep{buelens_model_2016}. The key question when correcting misclassification bias then becomes: which correction method reduces the mean squared error of the output most? The outcome depends on the assumptions made. The situation that fits the experimental phase of machine learning projects at NSIs is discussed briefly in Subsection~\ref{sec:methods-exp_phase}. The assumptions made in the experimental phase are considered to be the most restrictive ones. The answer to the key question under those restrictive assumptions has been provided by \citet{kloos_comparing_2020} and it is rather conclusive. A drawback of their result is that data are assumed to be annotated manually in each time period. In practice, manual data annotation is time consuming and hence expensive. Therefore, in Subsection~\ref{sec:methods-realistic}, we describe the situation that corresponds to the production process of official statistics. In Subsection~\ref{sec:methods-theory}, the theoretical results known for the experimental phase are adapted to suit the conditions of the production process of official statistics. The answer to the key question in that setting is presented in Section~\ref{sec:results}.

\subsection{The experimental phase}\label{sec:methods-exp_phase}

Consider a population $I$ of $N$ objects (households, enterprises, aerial images, company websites or other text documents) and some target classification, or stratum, $s_i$ for each object $i \in I$. For now, we restrict ourselves to dichotomous categorical variables, i.e., $s_i \in \{0, 1\}$, where category~$1$ indicates the category of interest. A compelling example is the use of aerial images of rooftops to identify houses (the objects indexed by $i$) with solar panels ($s_i = 1$) \citep{curier2018monitoring}. From now on, we make three essential. Our first assumption is that there is some (possibly time consuming or otherwise expensive) way to retrieve the true category $s_i$ for each $i \in I$, for example by manually inspecting the aerial images and annotating them with a label indicating whether the image contains a solar panel. Our second assumption is that background variables or other features in the data contain sufficient information to estimate $s_i$ accurately. We draw a small random sample from the population and determine the true category $s_i$ for the objects in the sample. Then, the obtained data are, as usual, split at random into two sets. The first set is used to estimate model parameters (model selection and training). The second set, referred to as the test set $I_\text{test} \subset I$, is used to estimate the out-of-sample prediction error of the model. The number of observations in the test set is denoted by $n$ and we assume that $n \ll N$.

Consequently, the model can be used to produce an estimate $\widehat s_i$ of the true category to which object $i$ belongs. Here, our third assumption is that the success and misclassification probabilities of the model depend on $i$, but only through the true value of $s_i$. More precisely, we let $p_{ab}$ be the probability that $\widehat s_i = b$ given that $s_i = a$, for $a,b \in \{0, 1\}$. This specifies the \textit{classification error model} as introduced by \citet{bross_misclassification_1954}, following the notation in \citet{delden_accuracy_2016}. In addition, we adopt the notation $\bm{a}_i$, which is a 2-vector equal to $(1,0)$ if $s_i = 1$ and $(0,1)$ if $s_i=0$. The estimate $\bm{\widehat{a}}_i$ is defined similarly. The sum of all $\bm{a}_i$ is the 2-vector of counts $\bm{v}$. The first component of the 2-vector $\bm{\alpha} = \bm{v}/N$ is called the \textit{base rate} and is denoted by $\alpha$. It is immediate that $\E[\bm{\widehat \alpha}] = P^T\bm{\alpha}$, where $P$ is the confusion matrix with entries $p_{ab}$ (with $p_{11}$ as the top left entry). In general, $P^T\bm{\alpha} \neq \bm{\alpha}$, which indicates that $\widehat \alpha$ is a biased estimator for the base rate $\alpha$. The statistical bias of $\widehat \alpha$ as estimator for the base rate $\alpha$ is referred to as \textit{misclassification bias}.

A wide range of correction methods to reduce misclassification bias is available, see \citet{buonaccorsi_measurement_2010}. As briefly indicated in Section~\ref{sec:introduction}, \citet{kloos_comparing_2020} compared several correction methods aimed at improving the accuracy of estimators for $\alpha$. Two correction methods were most promising. The first correction method is the \textit{misclassification estimator} $\widehat \alpha_p$. It is defined as the first component of the following $2$-vector:
\begin{equation}
    \bm{\widehat \alpha}_p = \left(\widehat{P}^T\right)^{-1} \bm{\widehat{\alpha}},
\end{equation}
in which $\widehat{P}$ is the row-normalized confusion matrix obtained from the test set, i.e., with entries $\widehat{p}_{ab}~=~n_{ab}/n_{a+}$, where $n_{ab}$ denotes the number of objects $i$ in the test set for which $s_i~=~a$ and $\widehat{s}_i~=~b$ and where $n_{a+}$ denotes $n_{aa}+n_{ab}$. Moreover, the second correction method is the \textit{calibration estimator} $\widehat{\alpha}_c$. It is defined as the first component of the following $2$-vector:
\begin{equation}
    \bm{\widehat \alpha}_c = \widehat{C} \bm{\widehat{\alpha}},
\end{equation}
in which $\widehat{C}$ is the column-normalized confusion matrix obtained from the test set, i.e., with entries $\widehat{c}_{ab}~=~n_{ab} / n_{+b}$, where $n_{+b}$ denotes $n_{ab} + n_{bb}$. \citet{kloos_comparing_2020} have shown that if the test set is indeed a random sample from the target population, then the mean squared error of $\widehat{\alpha}_c$ is always smaller than that of $\widehat{\alpha}_p$.

\subsection{The production process of official statistics}\label{sec:methods-realistic}

Official statistics on a particular social or economic indicator are often produced for a certain period of time, at least annually, but often more frequently (quarterly or monthly). For as long as NSIs produce the official statistics on such an indicator, the output quality is required to be high. A challenging element in using classification algorithms in the production process of official statistics is that the target population $I$ changes over time, including the background variables $\bm{x}_i$ and the base rate $\alpha$. Therefore, the test set drawn at random from the population at one time period cannot be viewed as a random sample from the population at the next time period. A first solution would be to draw a new test set from the population (and then manually annotate the data) at each time period for as long as the statistical indicator is produced. However, due to cost constraints, such frequent data annotation is infeasible in practice. Thus, we will have to make an additional assumption to further investigate the results achieved by \citet{kloos_comparing_2020} in the context of a production process.

The additional assumption that we make is that the out-of-sample prediction accuracy of the model, i.e., the matrix $P$, is stable during a short period of time. More specifically, we assume (1) that $s_i$ causally determines the background variables $\bm{x}_i$ that are used in the model for $\widehat s_i$ and (2) that the causal relation does not change between (at least) two consecutive months or quarters. These two assumptions are identical to \textit{prior probability shift} as defined by \citet{moreno-torres_unifying_2012}. The first assumption, i.e., the causal relation between $s_i$ and $\bm{x}_i$, seems reasonable in many applications. In epidemiology, a disease causally determines the symptoms. In sentiment analysis, the writer's sentiment causally determines the words that the writer chooses. In land cover mapping, the mapped object causally determines the pixel values in the image. The second assumption (in terms of the classification error model) reads that $\Pr(\widehat{s}_i | s_i)$ does not change between consecutive months or quarters, but that $\alpha$ is allowed to change.

In the setting of prior probability shift, we consider two populations, namely the target population at two different moments in time, indicated by $I$ and $I'$, with sizes $N$ and $N'$. We assume that the test set $I_\text{test} \subset I$ of size $n$ has been obtained as a random sample from the target population $I$ in the first month or quarter, with true base rate $\alpha$. The aim is to estimate the base rate $\alpha'$ in the second month or quarter, i.e., within population $I'$, using prediction $\widehat s_i$ for $i \in I'$ and the estimates of $p_{ab}$ based on $I_\text{test} \subset I$. The type of concept drift that we investigate, prior probability shift, can be quantified by the difference $\delta \coloneqq \alpha' - \alpha$, which we will briefly refer to as the \textit{drift}. In the experimental phase we only consider a single population, which corresponds to putting $\delta = 0$. In Subsection \ref{sec:methods-theory}, we investigate the mean squared error of the calibration and misclassification estimator when $\delta \neq 0$.

\subsection{Theoretical results}\label{sec:methods-theory}
Expressions for the bias~$B$ and variance~$V$ of the misclassification estimator~$\alpha_p$ under drift~$\delta$ can be derived easily from the expressions presented by \citet{kloos_comparing_2020}. It follows that
\begin{align}
    & B[\hat \alpha_p]
    = \frac{1}{n(p_{00} + p_{11} - 1)^2} \cdot \left[ \frac{\alpha'}{\alpha}p_{11}(1-p_{11}) - \frac{1-\alpha'}{1-\alpha}p_{00}(1-p_{00}) \right] + O\left(\frac{1}{n^2}\right) \nonumber \\
    &\quad = \frac{p_{00} - p_{11}}{n(p_{00} + p_{11} - 1)} + \frac{\delta}{n(p_{00} + p_{11} - 1)^2} \cdot \left(\frac{p_{11}(1-p_{11})}{\alpha} + \frac{p_{00}(1-p_{00})}{1 - \alpha} \right) + O\left(\frac{1}{n^2}\right),
\end{align}
which is increasing in $\delta$ (but might first decrease in $\delta$ in absolute value). The variance of the misclassification estimator equals
\begin{equation}
    V(\hat \alpha_p) = \frac{(1-\alpha')^2 V(\hat p_{00}) + \alpha'^2 V(\hat p_{11})}{(p_{00} + p_{11} - 1)^2} + O\left(\frac{1}{n^2}\right),
\end{equation}
We neglect the terms of order $1/n^2$ and use Expressions~\eqref{eq:app_var_p11} and \eqref{eq:app_var_p00} from Appendix \ref{sec:appendix} to obtain
\begin{align}
    V(\hat \alpha_p)
    = \frac{1}{n(p_{00} + p_{11} - 1)^2} \cdot
      &\bigg[ T + 2\delta(p_{00} - p_{11})(p_{00} + p_{11}-1) \nonumber \\
      &\quad +\delta^2 \cdot \left(\frac{p_{11}(1-p_{11})}{\alpha} + \frac{p_{00}(1-p_{00})}{1 - \alpha} \right) \bigg] + O\left(\frac{1}{n^2}\right),
\end{align}
in which $T \coloneqq (1-\alpha)p_{00}(1-p_{00}) + \alpha p_{11}(1-p_{11})$. If $p_{00} \geq p_{11}$, then the variance increases as the drift $\delta$ increases. If $p_{00} < p_{11}$, then the effect of the drift is not immediately clear: increasing $\delta$ might decrease the variance, depending on the values of $\alpha$ and $\delta$. In Section~\ref{sec:results}, we will analyse the behaviour of $V(\hat \alpha_p)$ as function of $\alpha$ and $\delta$ numerically.

The expressions for the bias and variance of the calibration estimator presented by \citet{kloos_comparing_2020} were derived by conditioning on the base rate in the target population. If the drift $\delta$ is nonzero, that proof strategy breaks down. Therefore, we have adapted the proof to hold for nonzero $\delta$, resulting in the following expressions (see Expressions~\eqref{eq:bias_cal_drift} and \eqref{eq:var_cal_drift}).

\begin{theorem}\label{thm:bias_var_calibr_drift}
    The bias of $\hat \alpha_c$ as estimator for $\alpha$ under drift $\delta$ is given by
    \begin{equation}\label{eq:bias_cal_drift}
        B[\hat \alpha_c] = - \delta \frac{T}{\beta(1-\beta)} + O\left(\frac{1}{n^2}\right),
    \end{equation}
    in which $\beta \coloneqq (1-\alpha)(1-p_{00}) + \alpha p_{11}$ and $T = (1-\alpha)p_{00}(1-p_{00}) + \alpha p_{11}(1-p_{11})$. With that notation, the variance of $\hat \alpha_c$, under drift $\delta$, is given by
    \begin{align}\label{eq:var_cal_drift}
         V(\hat \alpha_c)
        &= \frac{\alpha(1-\alpha)}{n} \bigg[ \frac{T}{\beta(1-\beta)}  + 2\delta(p_{00} + p_{11} - 1) \left( \frac{p_{11}(1-p_{00})}{\beta^2} - \frac{p_{00}(1-p_{11})}{(1-\beta)^2} \right) \nonumber \\
        & \quad + \delta^2(p_{00} + p_{11} - 1)^2 \left( \frac{p_{11}(1-p_{00})}{\beta^3} + \frac{p_{00}(1-p_{11})}{(1-\beta)^3} \right) \bigg]
        + O\left(\frac{1}{n^2}\right).
    \end{align}
\end{theorem}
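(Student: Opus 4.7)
The plan is to exploit the independence of the test set (drawn from the first-period population $I$) and the prediction sample in $I'$, which holds under the double-sampling scheme combined with prior probability shift. Writing the first coordinate of $\bm{\widehat\alpha}_c$ as
\[
\widehat\alpha_c \;=\; \widehat c_{11}\,\widehat\alpha' \;+\; \widehat c_{10}\,(1-\widehat\alpha'),
\]
where $\widehat\alpha'$ is evaluated on $I'$ and the $\widehat c_{ab}$ are obtained from the test set, the crucial fact is that although $P$ is the same in both periods, the column-conditional probabilities $c_{ab}$ depend on $\alpha$ through the marginal $\beta = (1-\alpha)(1-p_{00}) + \alpha p_{11}$, and therefore do not match the analogous quantities in $I'$. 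Conditioning on the base rate, as in \citet{kloos_comparing_2020}, breaks down for $\delta \neq 0$, so I instead condition on the test-set column margin $n_{+b}$.

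For the bias I argue as follows. Given $n_{+b}$, the ratio $\widehat c_{ab} = n_{ab}/n_{+b}$ is the sample mean of $n_{+b}$ independent Bernoulli$(c_{ab})$ variables, so $\E[\widehat c_{ab}\mid n_{+b}] = c_{ab}$ and hence $\E[\widehat c_{ab}] = c_{ab}$ exactly (apart from the negligible event $n_{+b}=0$). Independence of the test set from $I'$ gives $\E[\widehat\alpha_c] = c_{11}\beta' + c_{10}(1-\beta')$ with $\beta' = \beta + \delta(p_{00}+p_{11}-1)$. Using $c_{11}\beta + c_{10}(1-\beta) = \alpha$, the bias relative to $\alpha' = \alpha + \delta$ collapses to $\delta\bigl[(p_{00}+p_{11}-1)(c_{11}-c_{10}) - 1\bigr]$. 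A direct calculation yields $c_{11}-c_{10} = \alpha(1-\alpha)(p_{00}+p_{11}-1)/[\beta(1-\beta)]$, and the identity
\[
\beta(1-\beta) \;=\; \alpha(1-\alpha)(p_{00}+p_{11}-1)^2 \,+\, T,
\]
which is just the variance decomposition $V(\widehat s) = V(\E[\widehat s\mid s]) + \E[V(\widehat s\mid s)]$ applied to the Bernoulli$(\beta)$ variable $\widehat s$, converts the bracket into $-T/[\beta(1-\beta)]$, matching~\eqref{eq:bias_cal_drift}.

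For the variance I again use the independence between $(\widehat c_{11},\widehat c_{10})$ and $\widehat\alpha'$, and I treat $\widehat\alpha'$ as deterministic at leading order since $V(\widehat\alpha') = \beta'(1-\beta')/N' = O(1/N')$ is negligible compared to the targeted $O(1/n)$ accuracy. Then $V(\widehat\alpha_c) \approx V\bigl(\beta'\widehat c_{11} + (1-\beta')\widehat c_{10}\bigr)$. The covariance between $\widehat c_{11}$ and $\widehat c_{10}$ vanishes: conditional on $n_{+1}$, the subsets of the test set with $\widehat s=1$ and $\widehat s=0$ are disjoint, so $\widehat c_{11}$ and $\widehat c_{10}$ are conditionally independent with constant conditional means $c_{11}$ and $c_{10}$. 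Each within-column variance equals $\E[c_{bb}(1-c_{bb})/n_{+b}] = c_{bb}(1-c_{bb})/(n\beta_b) + O(1/n^2)$ by the standard Taylor estimate for $\E[1/n_{+b}]$ with $n_{+b}\sim\mathrm{Binomial}(n,\beta_b)$. Substituting $c_{11}(1-c_{11}) = \alpha(1-\alpha)p_{11}(1-p_{00})/\beta^2$ and the analogous expression for column $0$, expanding $(\beta')^2$ and $(1-\beta')^2$ in powers of $\delta$, and grouping by $\delta^0,\delta^1,\delta^2$, I recover~\eqref{eq:var_cal_drift}; the $\delta^0$ piece collapses to $T/[\beta(1-\beta)]$ by the same identity used in the bias calculation.

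The main obstacle is careful bookkeeping: tracking which sample every random quantity is measured on, verifying that all cross-covariance terms vanish, and retaining only $O(1/n)$ contributions while discarding $O(1/n^2)$ remainders that arise from the delta method for $\E[1/n_{+b}]$ and from the $O(1/N')$ variability of $\widehat\alpha'$. The Taylor expansion of $\E[1/n_{+b}]$ and the variance-decomposition identity for $\beta(1-\beta)$ are the technical cornerstones; the remainder is systematic algebra.
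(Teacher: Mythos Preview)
Your argument is correct, and it is in fact more streamlined than the paper's own proof. The paper conditions on the \emph{row} margin $n_{1+}$, so that $\hat c_{11}\mid n_{1+}\overset{d}{=}X/(X+Y)$ with $X\sim\mathrm{Bin}(n_{1+},p_{11})$ and $Y\sim\mathrm{Bin}(n_{0+},1-p_{00})$ independent; this forces a chain of second-order Taylor expansions (once for the ratio $X/(X+Y)$, then again when integrating out $n_{1+}$) before one can even conclude that $\E[\hat c_{11}]=c_{11}+O(1/n^2)$. You instead condition on the \emph{column} margin $n_{+b}$, exploiting the multinomial property $n_{1b}\mid n_{+b}\sim\mathrm{Bin}(n_{+b},c_{1b})$, which yields $\E[\hat c_{1b}]=c_{1b}$ exactly (up to the exponentially small event $n_{+b}=0$) and $V(\hat c_{1b}\mid n_{+b})=c_{1b}(1-c_{1b})/n_{+b}$; only a single delta-method step for $\E[1/n_{+b}]$ is then needed. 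Interestingly, the paper already uses your column-margin conditioning to show $C(\hat c_{11},\hat c_{10})=0$, but reverts to row-margin conditioning for the individual moments; your approach applies the same idea uniformly. Both proofs rely on the independence of $\widehat\alpha^*$ from the $\hat c_{ab}$, on the negligibility of $V(\widehat\alpha^*)=O(1/N')$, and on the algebraic identity $\beta(1-\beta)=\alpha(1-\alpha)(p_{00}+p_{11}-1)^2+T$; your observation that this identity is just the law-of-total-variance decomposition of $\mathrm{Bernoulli}(\beta)$ is a nice touch absent from the paper. One small notational slip: the ``within-column variance'' involves $c_{1b}(1-c_{1b})$ rather than $c_{bb}(1-c_{bb})$, but your subsequent substitutions make clear you have the right quantities.
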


\begin{proof}
    See Appendix \ref{sec:appendix}.
\end{proof}

We make the following two observations: (1) the bias and the drift $\delta$ have opposite signs. and (2) the absolute bias is linearly increasing as a function of the absolute drift $|\delta|$. From these observations, the following sharp upper bound and lower bound for the absolute bias in terms of the absolute drift can be derived.

\begin{theorem}\label{thm:abs_bias_lower_bound}
    The absolute bias of $\hat \alpha_c$ as estimator for $\alpha' = \alpha + \delta$ is bounded from above by $|\delta|$.  If $p_{00} \leq p$ and $p_{11} \leq p$ for some $1/2 \leq p \leq 1$, then the absolute bias is at least $4p(1-p)|\delta|$.
\end{theorem}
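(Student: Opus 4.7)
The overall strategy is to use Theorem~\ref{thm:bias_var_calibr_drift} to reduce the claim to bounds on the dimensionless ratio $R := T/[\beta(1-\beta)]$. Indeed, to leading order in $1/n$ one has $|B[\hat\alpha_c]| = |\delta|\cdot R$, so the theorem reduces to the sandwich $4p(1-p) \leq R \leq 1$. I would first write out the explicit expressions $\beta = (1-\alpha)(1-p_{00}) + \alpha p_{11}$ and $1-\beta = (1-\alpha)p_{00} + \alpha(1-p_{11})$, and then treat the two inequalities separately.

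For the upper bound $R\leq 1$, the plan is to verify the polynomial identity
\[
 \beta(1-\beta) - T \;=\; \alpha(1-\alpha)\,(p_{00} + p_{11} - 1)^2
\]
directly by expansion. When $\beta(1-\beta)$ is multiplied out, the $(1-\alpha)^2$ and $\alpha^2$ pieces cancel against the two summands of $T$, leaving an $\alpha(1-\alpha)$-multiple of a quadratic in $p_{00},p_{11}$ that collapses to $(p_{00}+p_{11}-1)^2$ after regrouping (the identity $(p_{00}+p_{11})^2 - 2(p_{00}+p_{11}) + 1 = (p_{00}+p_{11}-1)^2$ is what makes this work). The right-hand side is non-negative, so $R\leq 1$, with equality in the degenerate cases $p_{00}+p_{11}=1$ or $\alpha\in\{0,1\}$.

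For the lower bound I would bound the numerator and denominator of $R$ separately. The denominator satisfies $\beta(1-\beta)\leq 1/4$ because $\beta\in[0,1]$. For the numerator, the map $x\mapsto x(1-x)$ is symmetric and unimodal about $1/2$, so the hypothesis $p_{aa}\leq p$ -- combined with the natural assumption that the classifier is at least as accurate as random guessing, $p_{aa}\geq 1/2$ (putting $p_{aa}\in[1-p,p]$ by symmetry of $x(1-x)$) -- yields $p_{aa}(1-p_{aa})\geq p(1-p)$ for $a=0,1$. Taking the convex combination with weights $1-\alpha$ and $\alpha$ gives $T\geq p(1-p)$, and dividing by $1/4$ gives $R\geq 4p(1-p)$. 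Sharpness is witnessed by $\alpha=1/2$, $p_{00}=p_{11}=p$, which forces $\beta=1/2$ and $T=p(1-p)$.

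The main computational step is the expansion behind the upper bound, where one must carry the $(1-\alpha)^2$, $\alpha(1-\alpha)$, and $\alpha^2$ coefficients through the product $\beta(1-\beta)$ until the perfect square emerges. The main conceptual subtlety is the implicit assumption $p_{aa}\geq 1/2$ needed for the lower bound: the hypothesis $p_{aa}\leq p$ alone is not sufficient, since a deterministic label-flipper with $p_{00}=p_{11}=0$ satisfies it yet renders $\hat\alpha_c$ exactly unbiased under drift, and this is the only delicate point in an otherwise direct algebraic argument.
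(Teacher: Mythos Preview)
Your proposal is correct, and in fact more elementary than the paper's own argument. The paper proves a separate monotonicity lemma: it computes the partial derivatives of $h(p_{00},p_{11},\alpha)=T/[\beta(1-\beta)]$ and shows $h$ is decreasing in each $p_{aa}$ on $[1/2,1]$; the upper bound then follows from $h(1/2,1/2,\alpha)=1$, and the lower bound from $h(p_{00},p_{11},\alpha)\geq h(p,p,\alpha)$ together with $\beta(p,p,\alpha)\bigl(1-\beta(p,p,\alpha)\bigr)\leq 1/4$. You bypass this calculus entirely. For the upper bound you invoke the identity $\beta(1-\beta)-T=\alpha(1-\alpha)(p_{00}+p_{11}-1)^2$, which the paper already establishes in the proof of Theorem~\ref{thm:bias_var_calibr_drift} but does not reuse here. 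For the lower bound you bound numerator and denominator of $R$ separately, using only the unimodality of $x\mapsto x(1-x)$ to get $T\geq p(1-p)$ and the trivial $\beta(1-\beta)\leq 1/4$. Your route is shorter and self-contained; the paper's monotonicity lemma gives finer structural information about how the bias varies with $p_{00},p_{11}$ (used later in the numerical section), which your argument does not recover. Your closing remark about the implicit hypothesis $p_{aa}\geq 1/2$ is well taken and matches the domain restriction the paper imposes in its lemma.
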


\begin{proof}
    See Appendix \ref{sec:appendix}.
\end{proof}

The third observation is that, under prior probability shift, the bias of the misclassification estimator is still of order $1/n$ while that of the calibration estimator is nonzero if $\delta \neq 0$ and does not decrease for increasing $n$. This third observation is the key observation. The implication is that the conclusions drawn by \citet{kloos_comparing_2020} for the experimental phase of a machine learning project in official statistics do not hold when the algorithms are implemented in the production process. There, the drift $\delta$ is nonzero and a decision boundary arises. The aim of Section~\ref{sec:results} is to investigate the properties of the decision boundary.

%
%

\section{Results}\label{sec:results}

\begin{figure}
    \centering
    \includegraphics[width=0.7\textwidth]{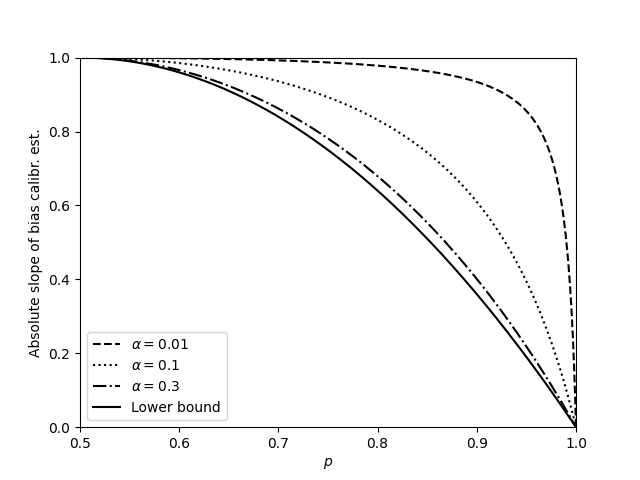}
    \caption{The slope of the bias of the calibration estimator $\hat \alpha_c$ as a function of the drift $\delta$ is equal to $-T/(\beta(1-\beta))$, which is strictly negative. The absolute value of that slope is plotted against the classification probability $p$, assuming that $p_{00} = p_{11} = p$, for four different values of $\alpha$. The solid black line depicts the theoretical lower bound (see Theorem~\ref{thm:abs_bias_lower_bound}) for the slope of the bias.}
    \label{fig:slope_abs_bias}
\end{figure}

The theoretical results from Section~\ref{sec:methods} indicate that in case $\delta$ is nonzero a decision boundary arises (between preferring (a) the misclassification estimator and (b) the calibration to reduce misclassification bias). The aim of this section is to understand that decision boundary. It is the main focus of Subsection~\ref{sec:results-boundary}. In advance, we investigate the bias under prior probability shift of the calibration estimator more closely in Subsection~\ref{sec:results-bias} and the difference in mean squared error between the two estimators in Subsection~\ref{sec:results-diff_mse}.

\subsection{Bias of the calibration estimator}\label{sec:results-bias}
We start plotting $T/(\beta(1-\beta))$, the absolute value of the slope of the bias of the calibration estimator, as a function of the classification probabilities for different values of $\alpha$, i.e., the base rate in the test set. For visualisation purposes, we restrict the function to $p_{00} = p_{11}$, parameterised by $p$. The results are depicted in Figure~\ref{fig:slope_abs_bias}, including the theoretical lower bound stated in Theorem~\ref{thm:abs_bias_lower_bound}. The slope of the bias as a function of $p$ is decreasing from $1$ at $p=0.5$ to $0$ at $p=1$. The smaller the value of $\alpha$, the later the function drops to 0. The reason is that the drift $\delta$ is defined as an absolute number and therefore it is relatively larger for smaller values of $\alpha$. From this observation we may conclude that the impact of (an absolute) drift $\delta$ on the bias of $\hat\alpha_c$ increases if $\alpha$ is further away from $0.5$, i.e., if the so-called \textit{class imbalance} increases.

\begin{figure}
    \centering
    \includegraphics[width = \textwidth]{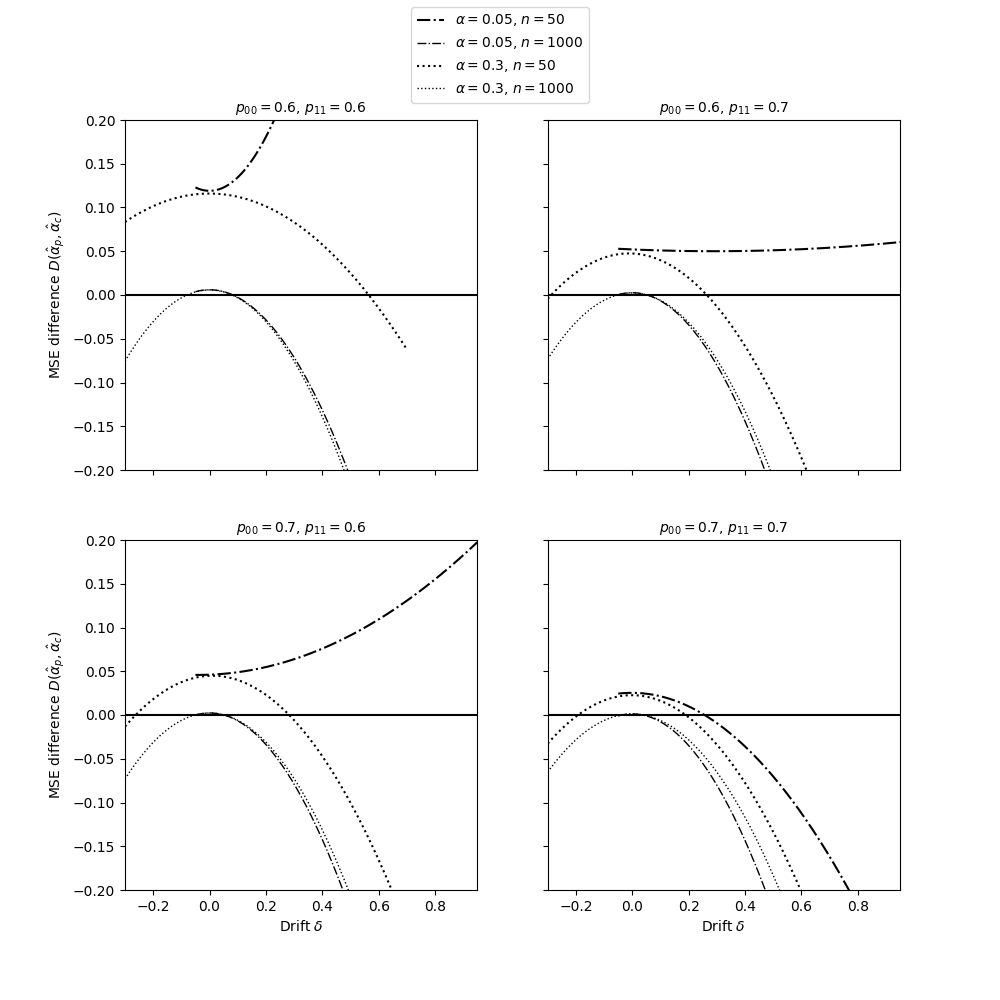}
    \caption{The difference $D(\hat\alpha_p, \hat\alpha_c)$ between the MSE of the misclassification estimator $\hat \alpha_p$ and that of the calibration estimator $\hat \alpha_c$, plotted as a function of $\delta$ for each possible combination of $\alpha \in \{0.05, 0.3\}$, $n \in \{50, 1000\}$ and $p_{00}, p_{11} \in \{0.6, 0.7\}$. Note that the drift $\delta$ ranges from $-\alpha$ to $1-\alpha$, because $\alpha' = \alpha + \delta$ must lie between 0 and 1.}
    \label{fig:mse_diff_function}
\end{figure}

\subsection{Difference in mean squared error}\label{sec:results-diff_mse}
Subsequently, we investigate the difference $D(\hat\alpha_p, \hat\alpha_c) \coloneqq MSE(\hat\alpha_p) - MSE(\hat\alpha_c)$ between the mean squared error of the misclassification estimator and that of the calibration estimator. The value of $D(\hat\alpha_p, \hat\alpha_c)$ as a function of $\delta$ is depicted in Figure~\ref{fig:mse_diff_function} for each possible combination of $\alpha \in \{0.05, 0.3\}$, $n \in \{50, 1000\}$ and $p_{00}, p_{11} \in \{0.6, 0.7\}$. Note that the drift $\delta$ ranges from $-\alpha$ to $1-\alpha$, because $\alpha' = \alpha + \delta$ must lie between 0 and 1. We report the following four observations. First, the difference is positive if $\delta = 0$ in any of the line plots, which corresponds to the main conclusion drawn by \citet{kloos_comparing_2020}. Second, when $n$ is sufficiently large (thin lines), the difference between the line plots are small. The reason is that the contribution of the variance terms is negligible compared to that of the squared bias of $\hat \alpha_c$, which does not depend on $n$ (see Theorem~\ref{thm:bias_var_calibr_drift}). Third, for highly imbalanced datasets combined with small test sets, i.e., $\alpha$ close to $0$ and $n$ small (thick dash-dotted lines), the variance of $\hat\alpha_p$ dominates if either $p_{00}$ is close to $0.5$ or $p_{11}$ is close to $0.5$. As a result, the calibration estimator has lowest mean squared error, independent of the magnitude of the drift $\delta$. Fourth, if the class distribution is relatively balanced (dotted lines), the difference $D(\hat\alpha_p, \hat\alpha_c)$ will become negative if $\delta$ increases, but the intersection moves farther away from $\delta = 0$ as $n$ decreases.

\subsection{The preferred estimator}\label{sec:results-boundary}
Finally, we compute, numerically, the unique positive value of $\delta$ (if it exists) at which the mean squared error of the misclassification and calibration estimator are identical. That is, we collect and reorganise the points of intersection $D(\hat\alpha_p, \hat\alpha_c)=0$ as discussed in Subsection~\ref{sec:results-diff_mse}. We view $D(\hat\alpha_p, \hat\alpha_c)$ as a map from $\R^3$ to $\R$ by fixing $\alpha$ and $n$ and using $\delta$, $p_{00}$ and $p_{11}$ as variables. Then, we plot the line within the two-dimensional surface $D(\hat\alpha_p, \hat\alpha_c) = 0$ where $p_{00} = p_{11}$, resulting in Figure~\ref{fig:min_delta}. Interestingly, the result is a decreasing function of $p$. At first, the result might seem to contradict the result obtained in the first analysis, cf. Figure~\ref{fig:slope_abs_bias}. There, the absolute slope of the bias as function of $\delta$ decreases with increasing $p$. Hence, the mean squared error of $\hat \alpha_c$ increases more slowly as a function of $\delta$ with increasing $p$. However, the result in Figure~\ref{fig:min_delta} follows from the fact that the difference in variance between $\hat \alpha_c$ and $\hat \alpha_p$ rapidly decreases as $p$ increases.

We stress that the lines in Figure~\ref{fig:min_delta} can be interpreted as decision boundaries. Each statistical indicator that is based on a classification algorithm plots somewhere in the $(p, \delta)$-plane depicted in Figure~\ref{fig:min_delta}. Our experimental result then reads as follows. If the plot of the indicator in the $(p, \delta)$-plane ends up above the decision boundary (which depends on $\alpha$ and $n$), then the misclassification estimator should be preferred over the calibration estimator to reduce misclassification bias. Otherwise, the calibration estimator should be preferred over the misclassification estimator. Moreover, in practice one should always compute the (estimated) bias and variance of the applied estimator, for they might still be high, e.g., when $n$ and $p$ are small and $\delta$ is large.

As a final remark, we indicate that these results hold if only the misclassification estimator and calibration estimator are considered. Admittedly, there may exist other estimators that might reduce misclassification bias even further. 

\begin{figure}
    \centering
    \includegraphics[width=0.7\textwidth]{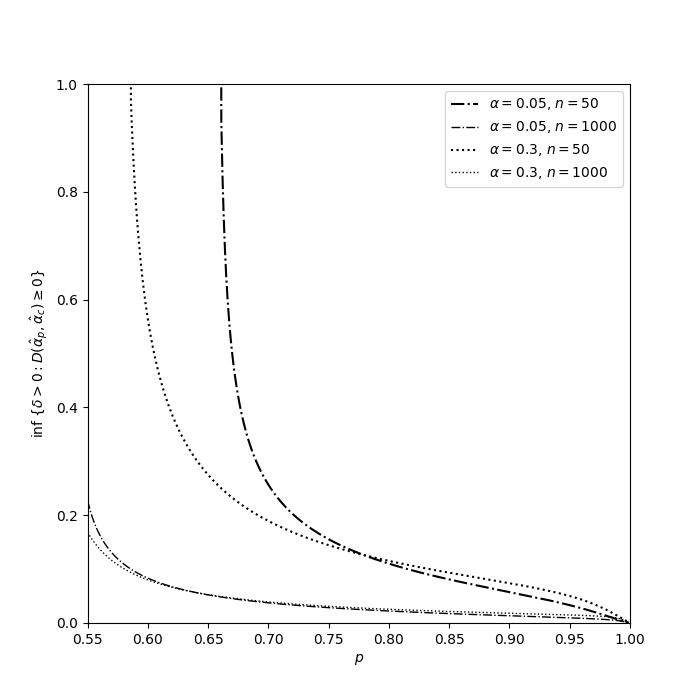}
    \caption{The unique positive value $\delta$ (if it exists) for which $D(\hat\alpha_p, \hat\alpha_c) = 0$, as a function of the classification probability $p$, assuming $p_{00} = p_{11} = p$. The lines should be interpreted as decision boundaries: below each of these lines the calibration estimator is preferred, while above each of the lines the misclassification estimator is preferred.}
    \label{fig:min_delta}
\end{figure}

%
%

\section{Conclusions and Discussion}\label{sec:discussion}

In this research, we investigated the output quality of official statistics based on classification algorithms. The main problem examined was how to reduce the bias caused by prior probability shift. We focused on two bias correction methods, namely (1) the misclassification estimator and (2) the calibration estimator. The results known for these two estimators failed to hold under prior probability shift. To obtain a further insight into the output quality of official statistics based on classification algorithms under prior probability shift, we adapted and extended the results achieved by \citet{kloos_comparing_2020} to hold for any value of the drift $\delta$. As theoretical results, we were able to show that (1) the calibration estimator is no longer unbiased and that (2) the absolute bias as a first-order approximation is a linearly increasing function of the absolute drift $|\delta|$ and does not depend on the test set size $n$.

Building on the theoretical results, we performed a simulation study consisting of three subsequent numerical analyses. The main conclusion drawn from the simulation results, is that the mean squared error of the calibration estimator is smaller than that of the misclassification estimator only when the performance of the classifier (in terms of $p_{00}$ and $p_{11}$) is low or when the drift $\delta$ is close to 0. The main conclusion has at least two significant implications. The first implication is that the conclusion gives a better understanding of the output quality of official statistics based on machine learning algorithms. More specifically, recommendations on which correction methods should be implemented in which situation are given. They allow for a more reliable implementation of machine learning algorithms in official statistics. The second implication is that the impact of the size and frequency of the training and test datasets is better understood. Essentially, our results show that the calibration estimator should \textit{not} be applied to data streams or time series data, unless training and test data in each time period are available to (a) retrain the classifier and hence (b) adapt to concept drift.

In case concept drift adaptation is considered too expensive due to cost constraints, the main conclusion (see above) implies that some minimal classification accuracy is required in order to use the misclassification estimator. To guarantee higher classification accuracy, more labelled training data have to be created, in general. In other words, NSIs should be careful when evaluating the cost efficiency of implementing machine learning algorithms for the production of official statistics. In the end, a substantial amount of high quality annotated data have to be created manually and consistently over a long period of time, which requires long-term investments in data analysts and domain experts.

Finally, we suggest three directions for future research. First, the robustness of classifier-based estimators should also be investigated for other types of concept drift, starting with the less restrictive type of prior probability shift as defined by \citet{webb_characterizing_2016}. Second, it might be worthwhile to examine methods for concept drift adaptation that are based on unlabelled data only, by carefully incorporating changes in the distribution of $P(X)$. Third, combinations or ensembles of different estimators require further research. We believe that a well-chosen combination of estimators will increase the overall robustness of classifier-based estimators under concept drift.

%
%

\newpage
\bibliography{concept_drift.bib}

%
%

\newpage
\appendix
\section{Appendix} \label{sec:appendix}
This appendix contains the proofs of the theorems presented in the paper titled ``Improving the output quality of official statistics based on machine learning algorithms". For clarity, we will write $\widehat{\alpha}^*$ for the estimator based on the algorithms predictions $\widehat{s}_i$. In addition to the assumptions described in Section \ref{sec:methods}, we make two more technical assumptions, namely that $\widehat{\alpha}^*$ is independent of both the $\hat c_{ij}$ and the $\hat p_{ij}$. It follows that $\hat p_{00}$ and $\hat p_{11}$ are uncorrelated and that

\begin{equation}\label{eq:app_var_p11}
    V(\hat p_{11}) = \frac{p_{11}(1-p_{11})}{n \alpha} \left[1 + \frac{1 - \alpha}{n\alpha} \right]  + O\left(\frac{1}{n^3}\right).
\end{equation}
Similarly, the variance of $\hat p_{00}$ is given by
\begin{equation}\label{eq:app_var_p00}
    V(\hat p_{00}) = \frac{p_{00}(1-p_{00})}{n (1-\alpha)} \left[1 + \frac{\alpha}{n(1-\alpha)} \right] + O\left(\frac{1}{n^3}\right).
\end{equation}
For the proofs of these statements, consult Lemma 1 in the appendix of the paper by \citet{kloos_comparing_2020}. We will now provide the proof of Theorem \ref{thm:bias_var_calibr_drift} below.

\begin{proof}[Proof of Theorem \ref{thm:bias_var_calibr_drift}]
    
    Recall that the calibration estimator $\hat \alpha_c$ was given by
    \begin{equation}
        \hat\alpha_c = \hat \alpha^* \hat c_{11} + (1- \hat\alpha^*) \hat c_{10}.
    \end{equation}
    The derivations of the bias $B[\hat \alpha_c]$ and $V[\hat \alpha_c]$ are included below.
    
    \paragraph{Bias.}
    It is assumed that $\hat \alpha^*$ and $\hat c_{ij}$ are independent. Hence,
    \begin{equation}
        \E[\hat \alpha_c] = \E[\hat \alpha^*] \E[\hat c_{11}] + \E[1-\alpha^*] \E[\hat c_{10}].
    \end{equation}
    Recall the notation $\beta = (1-\alpha)(1-p_{00}) + \alpha p_{11}$ and set $\beta' \coloneqq (1-\alpha')(1-p_{00}) + \alpha' p_{11} = \E[\hat \alpha^*]$. To compute $\E[\hat c_{ij}]$, condition on $n_{1+}$, and note that $n_{0+} = n-n_{1+}$ is $n_{1+}$-measurable. It holds that  $\hat c_{11} \mid n_{1+} \overset{d}{=} X / (X+Y)$, with $X \sim Bin(n_{1+}, p_{11})$ and $Y \sim Bin(n_{0+}, 1-p_{00})$. Introducing the stochastic variable $\beta_+ \coloneqq n_{1+}p_{11} + n_{0+}(1-p_{00})$, a second-order Taylor approximation yields
    \begin{align}\label{eq:exp_c11}
        \E[\hat c_{11} \mid n_{1+}]
        &= \frac{n_{1+}p_{11}}{\beta_+}
        - \frac{n_{0+}(1-p_{00})}{\beta_+^3} n_{1+}p_{11}(1-p_{11})
        + \frac{n_{1+}p_{11}}{\beta_+^3} n_{0+}p_{00}(1-p_{00})
        + O\left( \frac{1}{n^2}\right) \nonumber \\
        &= \frac{n_{1+}p_{11}}{\beta_+} + p_{11}(1-p_{00})(p_{00}+p_{11}-1) \frac{n_{0+}n_{1+}}{\beta_+^3} + O\left( \frac{1}{n^2}\right). 
    \end{align}
    We then introduce the random variable $Z \sim Bin(n, \alpha)$ (i.e., $Z \overset{d}{=} n_{1+}$). Applying a Taylor approximation to the first term of Expression \eqref{eq:exp_c11} yields
    \begin{align}\label{eq:exp_c11_term1}
        \E\left[ \frac{n_{1+}p_{11}}{\beta_+} \right]
        &= \E\left[ \frac{p_{11}Z}{n(1-p_{00}) + (p_{00} + p_{11} - 1)Z} \right] \nonumber \\
        &= \frac{\alpha p_{11}}{\beta} - \frac{1}{2}\frac{2np_{11}(1-p_{00})(p_{00}+p_{11}-1)}{n^3\beta^3}n\alpha(1-\alpha) + O\left(\frac{1}{n^2}\right) \nonumber \\
        &= c_{11} - \frac{\alpha(1-\alpha)}{n}\frac{p_{11}(1-p_{00})(p_{00}+p_{11}-1)}{\beta^3} + O\left(\frac{1}{n^2}\right).
    \end{align}
    Next, apply a Taylor approximation to (the stochastic part of) the second term in \eqref{eq:exp_c11}:
    \begin{equation}\label{eq:exp_c11_term2}
        \E\left[ \frac{Z(n-Z)}{\beta_+^3} \right] = \frac{\alpha(1-\alpha)}{n\beta^3} + O\left(\frac{1}{n^2}\right).
    \end{equation}
    Combining \eqref{eq:exp_c11_term1} and \eqref{eq:exp_c11_term2} results in
    \begin{equation}\label{eq:exp_c11_final}
        \E[\hat c_{11}] = c_{11} + O\left(\frac{1}{n^2}\right) = \frac{\alpha p_{11}}{\beta} + O\left(\frac{1}{n^2}\right),
    \end{equation}
    where the second equality is included to stress that the result depends on $\alpha$, and not on $\alpha'$. Similarly, it follows that
    \begin{equation}\label{eq:exp_c10_final}
        \E[\hat c_{10}] = c_{10} + O\left(\frac{1}{n^2}\right) = \frac{\alpha(1-p_{11})}{1-\beta} + O\left(\frac{1}{n^2}\right).
    \end{equation}
    Substituting $\alpha' = \alpha + \delta$ and neglecting terms of order $1/n^2$ yields
    \begin{align}
        \E[\hat \alpha_c]
        &= \beta'\frac{\alpha p_{11}}{\beta} + (1-\beta')\frac{\alpha(1-p_{11})}{1-\beta} \nonumber \\
        &= \alpha p_{11} + \delta(p_{00} + p_{11} - 1) \frac{\alpha p_{11}}{\beta} + \alpha(1-p_{11}) + \delta(1 - p_{00} - p_{11}) \frac{\alpha(1-p_{11})}{1-\beta} \nonumber \\
        &= \alpha + \frac{\delta\alpha}{\beta(1-\beta)}\Big( (1-\beta)p_{11} - \beta(1-p_{11}) \Big)(p_{00} + p_{11} - 1) \nonumber \\
        & = \alpha + \frac{\delta\alpha(1-\alpha)(p_{00} + p_{11}-1)^2}{\beta(1-\beta)}.
    \end{align}
    It is straightforward to check that
    \begin{equation}
        \beta(1-\beta) - \alpha(1-\alpha)(p_{00} + p_{11}-1)^2 = \alpha p_{11}(1-p_{11}) + (1-\alpha)p_{00}(1-p_{00}) \eqqcolon T.
    \end{equation}
    Hence,
    \begin{equation}
        \E[\hat \alpha_c] = \alpha + \delta \left(\frac{\beta(1-\beta) - T}{\beta(1-\beta)} \right) + O\left(\frac{1}{n^2}\right) = \alpha' - \delta \frac{T}{\beta(1-\beta)} + O\left(\frac{1}{n^2}\right).
    \end{equation}
    Thus, we may conclude that the bias of $\hat \alpha_c$ as estimator of $\alpha'$ is equal to
    \begin{equation}
        B[\hat \alpha_c] = - \delta \frac{T}{\beta(1-\beta)} + O\left(\frac{1}{n^2}\right).
    \end{equation}

    \paragraph{Variance.}
    To compute the variance of $\hat \alpha_c$, we first note that
    \begin{equation}
        \E[(\hat \alpha^*)^2]
        = \E[\hat \alpha^*]^2 + V(\hat \alpha^*)
        = \E[\hat \alpha^*]^2 + O\left(\frac{1}{N}\right).
    \end{equation}
    A similar expression holds for the expectation of $(1-\hat \alpha^*)^2$ and that of $(1-\hat \alpha^*)\hat\alpha^*$. Neglecting the terms of order $1/N$, the above implies that
    \begin{align}
        V(\hat \alpha_c)
        &= V(\hat \alpha^* \hat c_{11}) + V((1- \hat \alpha^*)\hat c_{10}) + C(\hat \alpha^* \hat c_{11}, (1- \hat \alpha^*)\hat c_{10}) \nonumber \\
        &= \E[\hat \alpha^*]^2 V(\hat c_{11}) + \E[1-\hat \alpha^*]^2 V(\hat c_{10}) + \E[\hat \alpha^*]\E[(1-\hat \alpha^*)] C(\hat c_{11}, \hat c_{10}).
    \end{align}
    We may already substitute $\E[\hat \alpha^*] = \beta'$ in the above. It remains to derive expressions for $V(\hat c_{11})$, $V(\hat c_{10})$ and $C(\hat c_{11}, \hat c_{10})$. We compute $V(\hat c_{11})$ as $\E[\hat c_{11}^2] - \E[\hat c_{11}]^2$, because we have already derived an expression for the latter term. The random variable $\hat c_{11}^2 \mid n_{1+}$ is distributed as $X^2 / (X+Y)^2$. Setting $f(x,y) = x^2 / (x+y)^2$ yields
    \begin{equation}
        f_{xx}(x,y) = \frac{2y^2-4xy}{(x+y)^4}, \quad \text{and} \quad
        f_{yy}(x,y) = \frac{6x^2}{(x+y)^4}.
    \end{equation}
    It follows, neglecting terms of higher order, that
    \begin{align}\label{eq:var_c11}
        &\E[\hat c_{11}^2 \mid n_{1+}] \nonumber \\
        &= \frac{n_{1+}^2p_{11}^2}{\beta_+^2}
        + \frac{n_{0+}^2(1-p_{00})^2 - 2n_{1+}n_{0+} p_{11}(1-p_{00})}{\beta_+^4} n_{1+}p_{11}(1-p_{11}) + \frac{3n_{1+}^2p_{11}^2}{\beta_+^4}n_{0+}p_{00}(1-p_{00}) \nonumber \\
        &= \frac{n_{1+}^2p_{11}^2}{\beta_+^2}
        + p_{11}(1-p_{00})\frac{n_{1+}n_{0+}
        \Big( n(1-p_{00})(1-p_{11}) + n_{1+}(p_{00}+p_{11}-1)(2p_{11}+1)\Big)}{\beta_+^4}.
    \end{align}
    Again, let $Z \sim Bin(n, \alpha)$ and consider the function $f(z) = z^2/(A+Bz)^2$, with $A = n(1-p_{00})$ and $B = (p_{00} + p_{11} - 1)$. Then
    \begin{equation}
        f_{zz}(z) = \frac{2A^2 - 4ABz}{(A+Bz)^4}.
    \end{equation}
    The conditional expectation then equals (up to terms of order $1/n^2$):
    \begin{align}\label{eq:var_c11_term1}
        \E\left[\frac{n_{1+}^2p_{11}^2}{\beta_+^2}\right]
        &= \E\left[\frac{p_{11}^2Z^2}{(A+BZ)^2}\right] \nonumber \\
        &= \frac{\alpha^2 p_{11}^2}{\beta^2} + p_{11}^2\frac{n^2(1-p_{00})^2 - 2n^2\alpha(1-p_{00})(p_{00} + p_{11} - 1)}{n^4\beta^4}n\alpha(1-\alpha) + O\left(\frac{1}{n^2}\right) \nonumber \\
        &= c_{11}^2 + \frac{\alpha(1-\alpha)}{n} \frac{p_{11}^2(1-p_{00}) \Big(1-p_{00} - 2\alpha(p_{00}+p_{11}-1) \Big)}{\beta^4} + O\left(\frac{1}{n^2}\right).
    \end{align}
    Apply a Taylor approximation to (the stochastic part of) the second term in \eqref{eq:var_c11} to obtain:
    \begin{align}\label{eq:var_c11_term2}
        \frac{\alpha(1-\alpha)}{n}\frac{p_{11}(1-p_{00})\Big( (1-p_{00})(1-p_{11}) + \alpha(p_{00}+p_{11}-1)(2p_{11}+1)\Big)}{\beta^4} + O\left(\frac{1}{n^2}\right).
    \end{align}
    At last, combining \eqref{eq:var_c11_term1} and \eqref{eq:var_c11_term2}, and subtracting \eqref{eq:exp_c11_final} squared, the variance of $\hat c_{11}$ can be expressed as
    \begin{equation}\label{eq:var_c11_final}
        V(\hat c_{11}) = \frac{\alpha(1-\alpha)}{n}\frac{p_{11}(1-p_{00})}{\beta^3} + O\left(\frac{1}{n^2}\right).
    \end{equation}
    Similarly, it can be shown that
    \begin{equation}\label{eq:var_c10_final}
        V(\hat c_{10}) = \frac{\alpha(1-\alpha)}{n}\frac{p_{00}(1-p_{11})}{(1-\beta)^3} + O\left(\frac{1}{n^2}\right).
    \end{equation}
    Moreover, it can be shown that $\hat c_{11}$ and $\hat c_{10}$ are uncorrelated, using the same strategy that was used to prove that $\hat p_{00}$ and $\hat p_{11}$ are uncorrelated. For completeness:
    \begin{align}
        \E[\hat c_{11} \hat c_{10}]
        &= \E\left[ \E\left[\left.\frac{n_{11}n_{10}}{n_{+1}n_{+0}} \right| n_{+1}\right] \right] \nonumber \\
        &= \E\left[ \frac{1}{n_{1+}n_{0+}} \E\left[\left.n_{11}n_{10} \right| n_{+1}\right] \right] \nonumber \\
        &= \E\left[ \frac{1}{n_{1+}n_{0+}} \cdot n_{+1}c_{11}n_{+0}c_{10}\right]
        = c_{11}c_{10} = \E[\hat{c}_{11}] \E[\hat{c}_{10}].
    \end{align}
    It implies that $C(\hat c_{11}, \hat c_{10}) = \E[\hat c_{11} \hat c_{10}] - \E[\hat c_{11}] \E[\hat c_{10}] = 0$. Finally, we may conclude that
    \begin{equation}
        V(\hat \alpha_c)
        = \frac{\alpha(1-\alpha)}{n}\left( \beta'^2 \frac{p_{11}(1-p_{00})}{\beta^3} + (1-\beta')^2 \frac{p_{00}(1-p_{11})}{(1-\beta)^3} \right) + O\left(\frac{1}{n^2}\right).
    \end{equation}
    Substituting $\alpha' = \alpha + \delta$ yields
    \begin{align}
        V(\hat \alpha_c)
        &= \frac{\alpha(1-\alpha)}{n} \bigg[ \frac{T}{\beta(1-\beta)}  + 2\delta(p_{00} + p_{11} - 1) \left( \frac{p_{11}(1-p_{00})}{\beta^2} - \frac{p_{00}(1-p_{11})}{(1-\beta)^2} \right) \nonumber \\
        & \quad + \delta^2(p_{00} + p_{11} - 1)^2 \left( \frac{p_{11}(1-p_{00})}{\beta^3} + \frac{p_{00}(1-p_{11})}{(1-\beta)^3} \right) \bigg]
        + O\left(\frac{1}{n^2}\right).
    \end{align}
    The expression above completes the derivation of the variance of the calibration estimator under prior probability shift.
\end{proof}

To prove the Theorem \ref{thm:abs_bias_lower_bound}, we need the following lemma.

\begin{lemma}\label{lemma:partial_derivatives}
    The slope of the absolute value of the first-order approximation of the bias of the calibration estimator as a function of the absolute value $|\delta|$ of the prior probability shift is decreasing in $p_{00}$ and $p_{11}$ for all $1/2 \leq p_{00} \leq 1$ and $1/2 \leq p_{11} \leq 1$.
\end{lemma}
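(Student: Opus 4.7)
The plan is to reduce the lemma to a single algebraic monotonicity check and then exploit a symmetry. By Theorem~\ref{thm:bias_var_calibr_drift}, the first-order approximation of the bias is $-\delta\, T/(\beta(1-\beta))$, so the slope in question, as a function of $|\delta|$, is
\begin{equation}
S(p_{00},p_{11}) \coloneqq \frac{T}{\beta(1-\beta)}.
\end{equation}
I would start by invoking the identity $\beta(1-\beta) - T = \alpha(1-\alpha)(p_{00}+p_{11}-1)^2$, which is established in the course of the proof of Theorem~\ref{thm:bias_var_calibr_drift}. Setting $u \coloneqq p_{00}+p_{11}-1 \geq 0$, this rewrites $S = 1 - \alpha(1-\alpha)\, u^2/(\beta(1-\beta))$, so it suffices to prove that $u^2/(\beta(1-\beta))$ is non-decreasing in each of $p_{00}$ and $p_{11}$ on $[1/2,1]^2$.

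Next, I would differentiate with respect to $p_{11}$. Since $\partial u/\partial p_{11} = 1$ and $\partial \beta/\partial p_{11} = \alpha$, the quotient rule together with $u \geq 0$ reduces the problem to showing that
\begin{equation}
Q \coloneqq 2\beta(1-\beta) - u\alpha(1-2\beta) \geq 0
\end{equation}
throughout the region. The sign of $Q$ is not immediate: $1-2\beta$ can be positive (e.g.\ when $p_{00}=1$, $p_{11}=1/2$ and $\alpha$ small, one has $\beta < 1/2$), so the two terms can compete. This is the step I expect to be the main obstacle.

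To handle $Q$ uniformly, I would exploit the algebraic identity $\alpha u = \beta - (1-p_{00})$, which follows by expanding the definition of $\beta$. Substituting this into $u\alpha(1-2\beta)$ causes the quadratic $\beta$-terms to cancel, collapsing $Q$ to
\begin{equation}
Q = \beta(2p_{00}-1) + (1-p_{00}),
\end{equation}
a sum of manifestly non-negative terms whenever $p_{00} \geq 1/2$. This yields the monotonicity in $p_{11}$. Finally, the monotonicity in $p_{00}$ would follow without any new calculation from the symmetry $\alpha \leftrightarrow 1-\alpha$, $p_{00} \leftrightarrow p_{11}$, which leaves $T$ invariant and sends $\beta \mapsto 1-\beta$, hence fixes $S$; applying the already-proved $p_{11}$-monotonicity under this swap gives the desired $p_{00}$-monotonicity.
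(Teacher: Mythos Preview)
Your proof is correct but follows a different route from the paper's. The paper decomposes the slope directly as
\[
\frac{T}{\beta(1-\beta)} = \frac{(1-p_{00})p_{11}}{\beta} + \frac{p_{00}(1-p_{11})}{1-\beta},
\]
differentiates with respect to $x=p_{00}$, and obtains a difference of squares $((1-y)\beta)^2 - (y(1-\beta))^2$ whose sign is read off from the sandwich $1-p_{00}\le \beta\le p_{11}$. You instead use the identity $\beta(1-\beta)-T=\alpha(1-\alpha)u^2$ to rewrite $S=1-\alpha(1-\alpha)\,u^2/(\beta(1-\beta))$, reduce to monotonicity of $u^2/(\beta(1-\beta))$, and then collapse the derivative sign condition via the neat substitution $\alpha u=\beta-(1-p_{00})$, yielding $Q=\beta(2p_{00}-1)+(1-p_{00})\ge 0$ directly. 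Both arguments finish with the same $\alpha\leftrightarrow 1-\alpha$, $p_{00}\leftrightarrow p_{11}$ symmetry (the paper differentiates in $p_{00}$ first, you in $p_{11}$). The paper's approach is a bit more geometric (it relies on locating $\beta$ between $1-p_{00}$ and $p_{11}$), whereas yours is purely algebraic and has the small bonus that the rewriting $S=1-\alpha(1-\alpha)u^2/(\beta(1-\beta))$ makes the upper bound $S\le 1$ of Theorem~\ref{thm:abs_bias_lower_bound} immediate.
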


\begin{proof}
    We introduce the notation $x = p_{00}$, $y = p_{11}$ and $\beta = \beta(x,y, \alpha) = (1-\alpha)(1-x) + \alpha y$. We then define the functions
    \begin{equation}
    f(x,y, \alpha) = \frac{(1-x)y}{\beta} \quad \text{and} \quad g(x,y, \alpha) = \frac{x(1-y)}{1-\beta}.
    \end{equation}
    The function $h = f+g$ then satisfies $ |\delta| \cdot h(p_{00}, p_{11}, \alpha) = \big|B[\hat\alpha_c]\big|$ up to terms of order $1/n^2$. We will examine the sign of the partial derivatives of $h$ with respect to $x$ and $y$, which we denote by $h_x$ and $h_y$, respectively. To that end, we first compute the partial derivatives of $f$ and $g$, giving
    \begin{equation}
    f_x(x,y, \alpha) = \frac{-\alpha y^2}{\beta^2} \quad \text{and} \quad
    g_x(x,y, \alpha) = \frac{\alpha(1-y)^2}{(1-\beta)^2}.
    \end{equation}
    Hence,
    \begin{equation}\label{eq:part_deriv_h_x}
        h_x(x,y, \alpha) = \frac{\alpha}{\beta^2(1-\beta)^2} \cdot \Big( ((1-y)\beta)^2 - (y(1-\beta))^2 \Big).
    \end{equation}
    Setting this to zero yields $(1-y)\beta = y(1-\beta)$ or $(1-y)\beta = -y(1-\beta)$. As $ 1/2 \leq x,y \leq 1$ and $0 < \alpha < 1$ it follows that $ 1-x \leq \beta \leq y$ with equality if and only if $1-x = y$, i.e. $x=y = 1/2$. It implies that $(1-y)\beta$ is nonnegative and that $y(1-\beta)$ is strictly positive, hence the equation $(1-y)\beta = -y(1-\beta)$ has no solution. Moreover, it implies that $(1-y)\beta \leq y(1-\beta)$ with equality only at $x=y=1/2$. From this we may conclude that $h$ is decreasing in $x$ for all $1/2 < x \leq 1$ and that $h_x(\tfrac{1}{2}, \; \cdot \;, \; \cdot \;) = 0$.
    
    The partial derivatives $h_x$ and $h_y$ can be related through a simple symmetry argument: it holds that $\beta(y,x,\alpha) = 1-\beta(x, y, 1-\alpha)$, which implies that $h(y,x,\alpha) = h(x,y,1-\alpha)$. Consequently, it holds that $h_y(\; \cdot \;, \; \cdot \;, \alpha) = h_x(\; \cdot \;, \; \cdot \;, 1-\alpha)$. It follows that $h$ is also decreasing in $y$ for all $1/2 < y \leq 1$ and that $h_y(\; \cdot \;, \tfrac{1}{2}, \; \cdot \;) = 0$.
    
    We conclude that the slope $h$ of the first-order approximation of the bias of the calibration estimator under prior probability shift is decreasing in $p_{00}$ and $p_{11}$ for $1/2 \leq p_{00}, p_{11} \leq 1$, attaining its global maximum at $p_{00} = p_{11} = 1/2$, where $h = 1$ and $\big|B[\hat\alpha_c]\big| = |\delta|$.
\end{proof}

The statement of Theorem \ref{thm:abs_bias_lower_bound} is an immediate consequence of the lemma above.

\begin{proof}[Proof of Theorem \ref{thm:abs_bias_lower_bound}]
    Lemma \ref{lemma:partial_derivatives} implies that $|B[\hat\alpha_c]| \leq |\delta|$ and that $|B[\hat \alpha_c]| \geq |\delta| \cdot h(p,p, \alpha)$. To simplify the latter, observe that $T(p,p,\alpha) = p(1-p)$ and that $0 \leq 1-p < \beta(p,p,\alpha) < p \leq 1$, using that $1/2 \leq p \leq 1$ and $0 < \alpha < 1$. It follows that $\beta(1-\beta) \leq 1/4$, which completes the proof.
\end{proof}

\end{document}